\numberwithin{equation}{section}
\newcommand{\vA}{\vec A}
\newcommand{\vB}{\vec B}
\newcommand{\vX}{\vec X}
\newcommand{\vY}{\vec Y}
\newcommand\cE{\mathcal{E}}
\newcommand\cO{\mathcal{O}}
\newcommand{\Po}{{\rm Po}}
\newcommand{\Bin}{{\rm Bin}}
\renewcommand{\vec}[1]{\boldsymbol{#1}}
\newcommand{\vecone}{\vec{1}}
\newcommand\NN{\mathbb{N}}
\newcommand\pr{\mathbb{P}} 
\renewcommand\Pr{\pr}
\newcommand\Erw{\mathbb{E}}
\newcommand{\whp}{w.h.p.}
\newtheorem{definition}{Definition}[section]
\newtheorem{theorem}[definition]{Theorem}
\newtheorem{lemma}[definition]{Lemma}
\newtheorem{fact}[definition]{Fact}
\newcommand\Lem{Lemma}
\newcommand\Thm{Theorem}
\newcommand\bc[1]{\left({#1}\right)}
\newcommand\cbc[1]{\left\{{#1}\right\}}
\newcommand\abs[1]{\left|{#1}\right|}
\newcommand{\zero}{V_0}
\newcommand{\one}{V_1}
\newcommand{\zeroplus}{V_0^+}
\newcommand{\zerominus}{V_0^-}
\newcommand{\oneplus}{V_1^+}
\newcommand{\oneminus}{V_1^-}
\newcommand{\onei}[1]{V_{1,({#1})}}
\newcommand{\zeroplusi}[1]{V_{0,({#1})}^+}
\newcommand{\zerominusi}[1]{V_{0,({#1})}^-}
\newcommand{\oneplusi}[1]{V_{1,({#1})}^+}
\newcommand{\oneminusi}[1]{V_{1,({#1})}^-}
\newcommand{\G}{\mathbb{G}}
\begin{document}
	
	\title{Note on the offspring distribution for group testing in the linear regime}
	
	\thanks{Oliver Gebhard and Philipp Loick are supported by DFG 646/3}

	\author{Oliver Gebhard, Philipp Loick}

    \address{Oliver Gebhard, {\tt gebhard@math.uni-frankfurt.de},  Goethe University, Mathematics Institute, 10 Robert Mayer St, Frankfurt 60325, Germany.}
    \address{Philipp Loick {\tt loick@math.uni-frankfurt.de}, Goethe University, Mathematics Institute, 10 Robert Mayer St, Frankfurt 60325, Germany.}

	\begin{abstract}
	The group testing problem is concerned with identifying a small set of $k$ infected individuals in a large population of $n$ people. At our disposal is a testing scheme that can test groups of individuals. A test comes back positive if and only if at least one individual is infected.
	In this note, we lay groundwork for analysing belief propagation for group testing when $k$ scales linearly in $n$. To this end, we derive the offspring distribution for different types of individuals. With these distributions at hand, one can employ the population dynamics algorithm to simulate the posterior marginal distribution resulting from belief propagation.
	\end{abstract}
	
	\maketitle

\section{Introduction}\label{sec_intro}

\subsection{Motivation and related work}

The group testing problem, originally introduced by Dorfman \cite{Dorfman_1943}, is a prime example of a statistical inference problem. The problem can be formulated as follows. Consider a population of $n$ individuals $k$ of which suffer from a rare decease. Furthermore, there is a test scheme available that can test groups of individuals.
A test returns positive if and only if there is at least one infected individual in the tested group. The goal is to recover all infected individuals within the total population with the smallest number of tests with high probability (\whp)\footnote{We say that a sequence of events $\cE_n$ occurs with high probability if $\lim_{n \to \infty} \Pr(\cE_n)=1$.}.
In the last years this problem gained significant attention (see \cite{Aldridge_2019_2} for a detailed survey) and finds application in fields such as DNA sequencing \cite{Kwang_2006,Ngo_2000}, protein interaction experiments \cite{Mourad_2013, Thierry_2006} or the current COVID-19 pandemic \cite{Goethe_2020}. 
Dorfman's original approach to recover the set of infected individuals was to perform a two-stage testing scheme. In the first stage, groups of individuals would be tested. If a test returned positive, each individual would subsequently be tested separately. Conversely, a negative test ensures that every individual in the test is uninfected and thus no further testing is needed. While this scheme improves on individual testing for low infection rates, it is clearly not optimal. Over the years, a variety of algorithms have been proposed and analysed to cut down the number of tests.
Due to its amenable properties, mathematical research on group testing focused primarily on the sublinear regime where $k \sim n^{\theta}$ for some $\theta \in (0,1)$. By today, the information-theoretic and algorithmic thresholds in this regime are well understood.
In \cite{Coja_2019_2}, Coja-Oghlan et al. provide a novel pooling scheme and a sophisticated polynomial-time algorithm achieving reliable recovery with the information-theoretically minimum  number of tests possible for one-stage and multi-stage test designs.
The natural next step of the group testing research is to move to the regime where the number of infected individuals scales linearly in the number of infected individuals (as for instance of relevance in the ongoing pandemic \cite{Covid}). Next to Dorfman's original two-stage algorithm, a number of algorithms from the sublinear regime readily carry over. First, consider a plain random regular test design where each individual is assigned to a fixed number of tests uniformly at random without replacement \footnote{A regular design is provably superior to a Bernoulli-type design in the sublinear regime and likely superior as well in the linear regime.}. A first idea would be to apply the one-stage so-called {\tt COMP} algorithm in which we classify all individuals in negative tests as uninfected and all other individuals as infected. There is a slightly more sophisticated approach called {\tt DD} under which we first classify all individuals in negative tests as uninfected as above and remove them from the test design. Then, we search for any individuals included in a positive test on its own (after removal of the definitively uninfected from before) and classify them as infected. All other individuals are classified as uninfected. Since no one-stage algorithm can recover the set of infected individuals in the linear regime \whp \cite{Aldridge_2019}, both of these approaches will not entirely solve the group testing problem, but they might serve as starting points for a first group testing stage with a follow-up stage of individual testing for those individuals that are not definitively uninfected or infected. Interestingly, the {\tt DD} algorithm closely resembles the well-known warning propagation algorithm from mathematical physics which belongs to the hand tool in solving constraint satisfaction problems. Indeed, the first two steps of the {\tt DD} algorithm are precisely what warning propagation would stipulate for the group testing problem. Some prominent applications of warning propagation in other realms include \textit{graph colouring} \cite{Molloy_2012}, \textit{constraint satisfaction problems} \cite{Achlioptas_2001}, and the \textit{k-core of a graph} \cite{Pittel_1996}. The set of definitively uninfected and definitively infected individuals in the group testing problem correspond to the \textit{hard} fields identified by warning propagation in physics jargon. The interesting question is what to do with those individuals that are not definitively uninfected or infected? Both the {\tt COMP} and {\tt DD} algorithm remain clueless. For the sublinear regime, finding the remaining set of infected individuals boils down to solving a minimum vertex cover problem and a greedy vertex cover algorithm called {\tt SCOMP} has been suggested, which, however, does not improve performance beyond {\tt DD} \cite{Coja_2019}. For the linear regime, things are trickier and the default approach of mathematicians and physicists would be to apply belief propagation to a random regular test design as described above. However, analysing belief propagation in the linear regime is a challenging endeavour due to the breakdown of central-limit-like properties.
In this note, we lay some groundwork towards analysing belief propagation in the linear regime. To be precise, we consider different types of \textit{non-hard} individuals, i.e. individuals that are not classified as definitely uninfected or infected by the warning propagation algorithm and determine their offspring distribution. We analyse the second neighbourhood of any such individual type and derive the distribution of the different types of \textit{non-hard} individuals. With these distributions at hand, one can employ the population dynamics algorithm to simulate the posterior marginal distribution resulting from belief propagation for each of the \textit{non-hard} individual types. Before we can state our result, let us introduce some notation.

\subsection{Notation}

In the following, we consider individuals $x_1,...,x_n$ out which $k$ are infected. The infection status of each individual is encoded by a configuration $\sigma \in  \cbc{0,1}^V$. We call the set of infected individuals $V_1$ and the set of uninfected individuals $V_0$. Furthermore we consider tests $a_1,...,a_m$ and denote the set of tests by $F$. We assume the number of tests conducted to scale as $m=cn$ for some constant $c>0$. With $G$ denoting a graph describing the assignment of individuals to tests and using standard graph notation such $\partial x$ and $\partial a$ for the first neighbourhood of individual $x$ or test $a$, let $\hat \sigma \in \cbc{0,1}^F$ denote the test outcomes being determined by
\begin{align*}
    \hat \sigma_a = \hat \sigma_a(G, \sigma) = \vecone \cbc{\sum_{x \in \partial a} \sigma_x > 0} \qquad (a \in F)
\end{align*}
We employ a fixed variable individual degree model. Each individual has degree $\Delta=cd/\lambda$ for some constant $d$. We restrict our attention to constants $c$ and $d$ such that $\Delta$ is integer. Each individual draws its tests from the set $F$ without replacement, leading to a fluctuating test degree $(\vec \Gamma_a)_{a \in F}$. 
This test design gives rise to different types of individuals. For uninfected individuals, we distinguish two types, $\zerominus$ and $\zeroplus$. $\zerominus$ is the set of all uninfected individuals that show up in at least one negative test. $\zeroplus$ is the counterset. Formally,
\begin{align*}
    \zerominus = \cbc{x \in \zero: \exists a \in \partial x: \hat \sigma_a = 0} \qquad \text{and} \qquad \zeroplus = \zero \setminus \zerominus.
\end{align*}
Recall from the above, that the {\tt DD} algorithm which is the application of warning propagation to the group testing problem easily classifies $\zerominus$ - the definitely uninfected. Along the same lines, let $\oneminus$ denote the set of all infected individuals that show up in at least one test where all remaining individuals are from the set $\zerominus$. Again, those individuals are easily classified as infected by warning propagation. Formally,
\begin{align*}
    \oneminus = \cbc{x \in \one: \exists a \in \partial x: \forall x \in \partial a: x \in \zerominus} \qquad \text{and} \qquad \oneplus = \one \setminus \oneminus.
\end{align*}
If we consider an individual $x$ we can divide the second neighbourhood of $x$ into four sets 
\begin{align*}
  \partial^{(2)} x=\left\{ \partial^{(2)} x\cap V_1^+\right\}\cup \left\{ \partial^{(2)} x\cap V_0^+\right\}\cup \left\{ \partial^{(2)} x\cap V_1^-\right\}\cup \left\{ \partial^{(2)} x\cap V_0^-\right\}  
\end{align*}
To shorten notation, we refer to these sets as $\oneplusi{2}, \zeroplusi{2}, \oneminusi{2}, \zerominusi{2}$, respectively. Let $\cO_{0+}$ denote the event that the considered individual $x$ is from $\zeroplus$ and $\cO_{1+}$ the event that it is from $\oneplus$. 
Moreover, we write $\cO_{1+, a}$ for the event that test $a$ in the neighbourhood of $x$ is compatible with the individual under consideration being in $\oneplus$ and $\cO_{0+, a}$ for the event that test $a$ is compatible with the considered individual being in $\zeroplus$. All that $\cO_{0+, a}$ says is that test $a$ must feature an infected individual inducing the test to be positive and thus the considered individual to be in $\zeroplus$. The same goes for $\cO_{1+, a}$.
Our test design will be denoted by $\G$. In the following, we will typically consider a reduced graph $\G'$ obtained as follows. First, remove all negative tests from $\G$ and all individuals assigned to these negative tests, i.e. the set $\zerominus$. Next, remove all individuals that now are included in at least one positive test with no other individual, i.e. the set $\oneminus$. Finally, remove any tests in the neighbourhood of $\oneminus$. Let us denote this reduced graph $\G'$ which will be the primary object of study hereafter. Note, that this graph model precisely remains after we run warning propagation on $\G$. Put differently, all removed individuals will have completely polarised marginals after running belief propagation and the remove tests are inconsequential for the posterior distribution of the remaining individuals.

\subsection{Result}

Recall from the above, the warning propagation easily identifies individuals in $\zerominus$ and $\oneminus$. Thus, those individuals with non-polarised marginals under belief propagation will be those individuals from $\zeroplus$ and $\oneplus$. Let us now state the offspring distribution for those two types of individuals.

\begin{theorem} \label{thm_dist}
Let
\begin{align*}
    p_{0,\Delta} &= \Pr \bc{x \in \zeroplus \mid x \in \zero, \abs{\partial x}=\Delta} = \bc{1+n^{-\Omega(1)}} \bc{1-e^{-d}}^{\Delta} \\
    p_{1,\Delta} &= \Pr \bc{x \in \oneplus \mid x \in \one, \abs{\partial x}=\Delta} = \bc{1+n^{-\Omega(1)}} \bc{1 - \exp\bc{-d - d \frac{1-\lambda}{\lambda} p_{0,\Delta-1}}}^{\Delta}
\end{align*} 
where $ \Erw[\Delta]=cd/\lambda$. Moreover, let 

\begin{align*}
    q_{0} &= \bc{1+n^{-\Omega(1)}} \frac{\bc{1-\exp\bc{-d p_{1,\Delta-1}}} \exp \bc{-d\bc{1-p_{1,\Delta-1}}}} {\bc{1-\exp(-d)}} \\
    q_{1} &= \bc{1+n^{-\Omega(1)}} \frac{\bc{1-\exp\bc{-d \frac{1-\lambda}{\lambda} p_{0,\Delta-1} - d p_{1,\Delta-1}}} \exp \bc{-d\bc{1-p_{1,\Delta-1}}}}{1-\exp(-d)}.
\end{align*}
For an individual $x$, the following holds for $\partial^{(2]} x$ under $\G'$.\\\\
1. Given $x\in V_0^+$ we have

\begin{align*}
    \zeroplusi{2} &\xrightarrow{d} \sum_{i=1}^{\Delta} \vX_i \Po\bc{\frac{1-\lambda}{\lambda}d p_{0,\Delta-1}} \qquad \text{as} \quad n \to \infty \\
    \oneplusi{2} &\xrightarrow{d} \sum_{i=1}^{\Delta} \vX_i \Po_{\geq 1 }\bc{d p_{1, \Delta-1}} \qquad \text{as} \quad n \to \infty.
\end{align*}
with $(\vX_i)_{i \geq 1}$ being a sequence of independent Bernoulli random variables with parameter $q_0$.\\\\
2. Given $x\in V_1^+$ we have
\begin{align*}
    \zeroplusi{2}, \oneplusi{2} \xrightarrow{d} \sum_{i=1}^\Delta \vX_i \vY_i \qquad \text{as} \quad n \to \infty.
\end{align*}
with $(\vX'_i)_{i \geq 1}$ being a sequence of independent Bernoulli random variables with parameter $q_1$ and furthermore, $(\vY_i)_{i \geq 1}$ being a sequence of independent random variables with support in $(\NN_0)^2$ and probability density function for any $i \in \NN$ and $j,k \geq 0$ given by
\begin{align*}
    \Pr \bc{\vY_i = (j,k)} = \frac{\vecone \cbc{j+k>0} \Pr \bc{\vA=j} \Pr \bc{\vB =k}}{1-\Pr \bc{\vA=0} \Pr \bc{\vB=0}}
\end{align*}
where 
\begin{align*}
    \vA &\sim \Po\bc{\frac{1-\lambda}{\lambda} d p_{0, \Delta-1}} \\
    \vB &\sim \Po\bc{d p_{1, \Delta-1}}.
\end{align*}

\end{theorem}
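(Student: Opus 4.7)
The plan is to verify every expression in the theorem by a Poisson-approximation calculation on the $2$-ball around $x$ in $\G$, relying on the fact that in the fixed-degree bipartite model the individual degrees are constant and test degrees concentrate around $d/\lambda$, so neighbourhood overlaps are of order $O(1/n)$. For $p_{0,\Delta}$: conditional on $x\in\zero$ with degree $\Delta$, the number of infected individuals in $\partial a\setminus\{x\}$ for any $a\in\partial x$ is hypergeometric with mean converging to $d$, hence asymptotically $\Po(d)$; so each of the $\Delta$ tests is positive with probability $1-e^{-d}+O(1/n)$, and the $\Delta$ tests are pairwise disjoint with probability $1-O(1/n)$, giving $p_{0,\Delta}=(1-e^{-d})^{\Delta}(1+n^{-\Omega(1)})$. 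For $p_{1,\Delta}$: condition on $x\in\one$, so every test of $x$ is automatically positive. Via Poisson splitting, the other individuals in a fixed test $a\in\partial x$ split into four independent counts $N_p,N_m,N_{0p},N_{0m}$ of rates $dp_{1,\Delta-1}$, $d(1-p_{1,\Delta-1})$, $d(1-\lambda)p_{0,\Delta-1}/\lambda$, $d(1-\lambda)(1-p_{0,\Delta-1})/\lambda$ respectively; the quantities $p_{0,\Delta-1}$ and $p_{1,\Delta-1}$ appear because the $\oneplus/\zeroplus$ classification of another individual $y\in\partial a$ reduces to the analogous $p$-probability applied to $y$'s remaining $\Delta-1$ tests (the present test is already positive since $x$ is infected). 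The event that $a$ has a non-$\zerominus$ other is $\{N_p+N_m+N_{0p}\ge 1\}$, of probability $1-\exp(-d-d(1-\lambda)p_{0,\Delta-1}/\lambda)$, and taking the product over the $\Delta$ asymptotically independent tests yields $p_{1,\Delta}$.

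The same Poisson-splitting decomposition handles Cases 1 and 2. A test $a_i\in\partial x$ survives the $\G\to\G'$ reduction iff $N_m^{(i)}=0$ (no $\oneminus$ other). In Case 1 ($x\in V_0^+$), each $a_i$ is positive, so conditionally on $\{N_p^{(i)}+N_m^{(i)}\ge 1\}$ we compute $\Pr(N_m=0\mid N_p+N_m\ge 1)=e^{-d(1-p_{1,\Delta-1})}(1-e^{-dp_{1,\Delta-1}})/(1-e^{-d})=q_0$; conditional further on survival $\{N_m=0\}$ (which forces $N_p\ge 1$), $N_{0p}$ retains its unconditional $\Po((1-\lambda)dp_{0,\Delta-1}/\lambda)$ distribution while $N_p$ becomes $\Po_{\ge 1}(dp_{1,\Delta-1})$; these count the contributions of $a_i$ to $\zeroplusi{2}$ and $\oneplusi{2}$ respectively, so summing over $\Delta$ asymptotically independent tests recovers the two stated sums. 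In Case 2 ($x\in V_1^+$) the survival event is still $\{N_m^{(i)}=0\}$, but the conditioning is the stronger $\{N_p^{(i)}+N_m^{(i)}+N_{0p}^{(i)}\ge 1\}$; the resulting joint conditional law of the survival indicator and the counted pair $(N_{0p}^{(i)},N_p^{(i)})$ factorises into a Bernoulli$(q_1)$ factor $\vX_i$ independent of the pair $(\vA,\vB)$ conditioned on $\vA+\vB\ge 1$, which is exactly $\vY_i$; summing over $i$ gives $\sum_i\vX_i\vY_i$.

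The main obstacle is the rigorous justification that, up to a $n^{-\Omega(1)}$ error, (i) the four type-counts inside a single test are independent Poissons and (ii) the $\Delta$ tests incident to $x$ are mutually independent in the sense used above. Point (ii) is immediate from the $O(1/n)$ overlap bound for the fixed-degree bipartite model. Point (i) is subtler because $\zerominus$ and $\oneminus$ are defined through the \emph{global} graph: one must show that the type of an individual $y$ at graph distance $2$ from $x$ is determined, up to an $n^{-\Omega(1)}$ failure event, by the constant-size local neighbourhood of $y$ in $\G$. This follows from a standard coupling between the $2$-ball around $x$ in $\G$ and a two-stage Galton--Watson tree whose offspring rates are exactly the Poisson parameters listed above; because the relevant neighbourhood has bounded size, the coupling error is $n^{-\Omega(1)}$. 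Once this coupling is in place, all the displayed limits and the $(1+n^{-\Omega(1)})$ error factors in $p_{0,\Delta},p_{1,\Delta},q_0,q_1$ follow immediately from the Poisson calculations above.
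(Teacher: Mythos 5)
Your proposal follows essentially the same route as the paper: a per-test Poisson approximation of the counts of each individual type (with the type of a distance-two individual governed by $p_{0,\Delta-1}$ or $p_{1,\Delta-1}$ applied to its remaining $\Delta-1$ tests), Bayes/conditioning to obtain the survival probabilities $q_0,q_1$ and the conditional offspring laws, and asymptotic independence of the $\Delta$ tests. Your ``Poisson splitting'' into four independent counts $N_p,N_m,N_{0p},N_{0m}$ is a cleaner packaging of the paper's separate binomial computations, and your explicit Galton--Watson coupling to justify that the non-local labels $\zeroplus,\oneplus,\zerominus,\oneminus$ of distance-two individuals are determined by constant-size neighbourhoods up to an $n^{-\Omega(1)}$ failure event is a point the paper passes over in silence; making it precise would be a genuine improvement rather than a deviation.

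There is, however, one concrete discrepancy you should not gloss over. In Case 2 your conditioning event is $\{N_p+N_m+N_{0p}\ge 1\}$, so the computation you describe yields a survival probability with denominator $1-\exp\bc{-d-d\tfrac{1-\lambda}{\lambda}p_{0,\Delta-1}}$, not the stated $1-\exp(-d)$. The paper reaches $1-\exp(-d)$ by evaluating $\Pr\bc{\cO_{1+,a}}$ as the probability that the test contains another \emph{infected} individual only, which is inconsistent with its own definition of $\cO_{1+,a}$ (``at least one other infected individual or individual from $\zeroplus$'') and with the numerator $1-\exp\bc{-d\tfrac{1-\lambda}{\lambda}p_{0,\Delta-1}-dp_{1,\Delta-1}}$ it pairs with it. As written, your derivation is internally consistent but does not reproduce the displayed $q_1$, so you cannot simply assert that the joint law ``factorises into a Bernoulli$(q_1)$ factor'' with the theorem's $q_1$; you need to either adopt the paper's (narrower) conditioning event for the denominator or flag that the stated formula requires correction.
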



\section{Proof  Outline}




%

\subsection{Getting started}

It is an immediate consequence of our test design $\G$ that the number of individuals per test follows a $\Po\bc{d/\lambda}$-distribution and that the choice $d=\log(2)$ maximises the entropy of the test results. Since individuals are assigned to tests independently, we can characterise the distribution of infected and uninfected individuals as follows.
\begin{fact} \label{fact_dist}
For the number of uninfected and infected individuals in any test $a$, we have
\begin{align*}
    \abs{\partial a\cap V_0} \sim \Bin \bc{n-k, d/k} \qquad \text{and} \qquad \abs{\partial a\cap V_1} \sim \Bin \bc{k, d/k}.
\end{align*}

\end{fact}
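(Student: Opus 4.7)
The statement is a direct structural consequence of the fixed-variable-degree construction of $\G$, so the plan is short and the main task is simply to unpack definitions. The basic idea is to observe that each individual draws its $\Delta$ incident tests independently of the other individuals, so the indicator variables $(\vecone\{a \in \partial x_i\})_{i=1}^n$ are independent, and then to compute the common success probability of any such indicator.

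First, I would compute $\Pr(a \in \partial x)$ for a fixed test $a$ and a fixed individual $x$. Since $x$ draws its $\Delta$ tests uniformly without replacement from the $m = cn$ tests in $F$, a standard counting argument gives
\begin{align*}
    \Pr(a \in \partial x) \;=\; \frac{\binom{m-1}{\Delta-1}}{\binom{m}{\Delta}} \;=\; \frac{\Delta}{m} \;=\; \frac{cd/\lambda}{cn} \;=\; \frac{d}{\lambda n} \;=\; \frac{d}{k},
\end{align*}
using $\lambda = k/n$. Crucially, this probability does not depend on the identity of $x$, and in particular it is the same for infected and uninfected individuals, because the test design is agnostic to the configuration $\sigma$.

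Second, I would invoke the independence of the test neighbourhoods across distinct individuals. Partitioning the population as $V_0 \sqcup V_1$ with $|V_0| = n-k$ and $|V_1| = k$, the random variable $|\partial a \cap V_0|$ is exactly a sum of $n-k$ i.i.d.\ Bernoulli$(d/k)$ variables and $|\partial a \cap V_1|$ is a sum of $k$ i.i.d.\ Bernoulli$(d/k)$ variables. This delivers the two claimed binomial distributions.

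There is no substantive obstacle here; the statement is more a bookkeeping lemma than a theorem, and its only role is to fix the input distributions used later when passing to the Poisson limits $\Po(d(1-\lambda)/\lambda)$ and $\Po(d)$ that drive the arguments in the proof of \Thm~\ref{thm_dist}. The one mild subtlety worth flagging in the write-up is that the two binomials $|\partial a \cap V_0|$ and $|\partial a \cap V_1|$ are independent of one another (as functions of disjoint sets of independent neighbourhoods), which will be used implicitly whenever the total test degree $\vec\Gamma_a$ is decomposed according to infection status.
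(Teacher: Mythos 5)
Your proposal is correct and is essentially the argument the paper intends: the Fact is stated without proof, justified only by the preceding remark that individuals are assigned to tests independently, together with the per-test inclusion probability $\Pr(a\in\partial x)=\Delta/m=d/k$ that the paper establishes separately in \Lem~\ref{lem_prob_test} (your $\binom{m-1}{\Delta-1}/\binom{m}{\Delta}$ computation is the same quantity via the complementary count). Your added remark on the independence of the two binomials is a reasonable bookkeeping point but not something the paper makes explicit.
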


\begin{lemma} \label{lem_prob_test}
For any individual $x$ and test $a$ we have
\begin{align*}
    \Pr \bc{x \in \partial a} = d/k
\end{align*}
\end{lemma}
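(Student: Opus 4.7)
The plan is to unpack the definition of the test design $\G$ and compute directly. By assumption each individual $x$ independently draws its $\Delta = cd/\lambda$ tests uniformly at random without replacement from the set $F$ of $m = cn$ tests. Consequently, for a fixed test $a \in F$, the event $\{x \in \partial a\}$ is the event that $a$ is among the $\Delta$ tests chosen by $x$. By symmetry over the $m$ tests, this probability equals $\Delta/m$.

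Substituting the values from the model,
\begin{align*}
    \Pr\bc{x \in \partial a} \;=\; \frac{\Delta}{m} \;=\; \frac{cd/\lambda}{cn} \;=\; \frac{d}{\lambda n}.
\end{align*}
Finally, in the linear regime we have $k = \lambda n$, and hence $d/(\lambda n) = d/k$, which is the claimed identity.

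There is essentially no obstacle here: the statement follows in one line from the symmetry of uniform sampling without replacement, together with the book-keeping identities $\Delta = cd/\lambda$, $m = cn$, and $k = \lambda n$ that define the linear regime. The only thing worth emphasising in the writeup is that the probability is the same for every pair $(x,a)$ because the individuals' test-choices are exchangeable across $F$; once this is noted, the rest is arithmetic.
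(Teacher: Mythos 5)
Your proof is correct and follows essentially the same route as the paper's: both reduce the claim to the observation that a uniformly chosen $\Delta$-subset of the $m$ tests contains a fixed test $a$ with probability $\Delta/m$ (the paper computes this via $1-\binom{m-1}{\Delta}\binom{m}{\Delta}^{-1}$, you via exchangeability), followed by the same arithmetic $\Delta/m = (cd/\lambda)/(cn) = d/k$ using $k=\lambda n$. No substantive difference.
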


\begin{proof}
A simple calculation reveals
\begin{align*}
    \Pr \bc{x \in \partial a} = 1 - \Pr \bc{x \notin \partial a} = 1 - \binom{m-1}{\Delta} \binom{m}{\Delta}^{-1} = \Delta/m = d/k.
\end{align*}
as claimed.
\end{proof}

\begin{lemma} \label{lem_zeroplus}
We have
\begin{align*}
    p_{0,\Delta} := \Pr \bc{x \in \zeroplus \mid x \in \zero,\abs{\partial x}=\Delta} = \bc{1+n^{-\Omega(1)}} \bc{1-e^{-d}}^{\Delta}
\end{align*}
\end{lemma}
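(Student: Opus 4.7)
The plan is to compute $\Pr(x \in \zeroplus \mid x \in \zero, |\partial x| = \Delta)$ via inclusion-exclusion over test-covering events. By symmetry of the fixed-degree test-assignment model, I may condition on $\partial x = \cbc{a_1, \dots, a_\Delta}$ being any particular $\Delta$-subset of $F$; since individuals choose their test-neighbourhoods independently and uniformly in $\G$, the test-neighbourhoods $\bc{T_y}_{y\in\one}$ of the $k$ infected individuals are then i.i.d.\ uniform $\Delta$-subsets of $F$, independent of $x$'s choice. Since $x$ is uninfected, the event $\cbc{x\in\zeroplus}$ is precisely $\cbc{a_1,\dots,a_\Delta}\subseteq\bigcup_{y\in\one}T_y$, i.e.\ every test in $\partial x$ must contain at least one infected individual.

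Applying inclusion-exclusion over the subset of $\cbc{a_1,\dots,a_\Delta}$ that remains uncovered, and using independence of the $T_y$ to factor the probability that all $k$ infected individuals avoid a fixed $s$-subset of tests, yields
\begin{align*}
\Pr\bc{x\in\zeroplus \mid x \in \zero,\, |\partial x| = \Delta} = \sum_{s=0}^{\Delta}(-1)^s\binom{\Delta}{s}\bc{\binom{m-s}{\Delta}\binom{m}{\Delta}^{-1}}^k.
\end{align*}
Expanding each ratio as $\binom{m-s}{\Delta}/\binom{m}{\Delta} = \prod_{i=0}^{\Delta-1}\bc{1-s/(m-i)}$, Taylor-expanding (using $s,\Delta = O(1)$ and $m = cn = \Theta(n)$) and invoking the identity $k\Delta/m = d$, I obtain $\bc{\binom{m-s}{\Delta}/\binom{m}{\Delta}}^k = e^{-sd}\bc{1+O(1/n)}$. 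Substituting back and applying the binomial theorem gives
\begin{align*}
\Pr\bc{x\in\zeroplus \mid x \in \zero,\, |\partial x| = \Delta} = (1-e^{-d})^\Delta + O(1/n)\cdot(1+e^{-d})^\Delta = (1-e^{-d})^\Delta\bc{1+n^{-\Omega(1)}},
\end{align*}
since $(1-e^{-d})^\Delta$ is a strictly positive constant.

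No serious obstacle arises: the computation consists of routine asymptotic expansions. The only care required is tracking the additive error through the alternating inclusion-exclusion sum, but because only $\Delta+1 = O(1)$ terms appear and their absolute values sum to $(1+e^{-d})^\Delta = O(1)$, the cumulative additive error is $O(1/n)$, which becomes the claimed multiplicative $(1+n^{-\Omega(1)})$ correction once the positive constant $(1-e^{-d})^\Delta$ is factored out. I expect the same inclusion-exclusion template, now over individuals that serve as the unique positive witness for a test, to underpin the analysis of $p_{1,\Delta}$ and the downstream offspring distributions in the remainder of the paper.
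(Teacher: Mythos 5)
Your proof is correct, and it takes a genuinely different --- and more careful --- route than the paper's. The paper's own argument is essentially two lines: using \Lem~\ref{lem_prob_test} and the mutual independence of the infected individuals' test choices, it computes $\Pr\bc{\hat\sigma_a=1}=\bc{1+n^{-\Omega(1)}}\bc{1-e^{-d}}$ for a single test $a$, notes that the expected number of positive tests in $\partial x$ is therefore $\bc{1+n^{-\Omega(1)}}\bc{1-e^{-d}}\Delta$, and then asserts the claim ``follows readily'' from $\Delta=\Theta(1)$; this implicitly treats the $\Delta$ events $\cbc{\hat\sigma_{a_1}=1},\dots,\cbc{\hat\sigma_{a_\Delta}=1}$ as asymptotically independent (the expectation statement alone does not give the joint probability that \emph{all} tests in $\partial x$ are positive). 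Your inclusion--exclusion over the uncovered subset of $\partial x$ instead produces the exact expression $\sum_{s=0}^{\Delta}(-1)^s\binom{\Delta}{s}\bc{\binom{m-s}{\Delta}\binom{m}{\Delta}^{-1}}^{k}$ before taking asymptotics, so the dependence among the $\Delta$ tests induced by the shared infected individuals is handled rigorously rather than assumed away; the only extra care needed --- converting the additive $O(1/n)$ error from the alternating sum into the multiplicative $\bc{1+n^{-\Omega(1)}}$ form using that $\bc{1-e^{-d}}^{\Delta}$ is a positive constant --- you carry out correctly. Both arguments rest on the same model facts (i.i.d.\ uniform $\Delta$-subsets of $F$ for the infected individuals and $k\Delta/m=d$), but yours substantiates the independence step the paper glosses over and yields an explicit error term as a by-product.
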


\begin{proof}
Infected individuals are assigned to tests mutually independent. In combination with \Lem~\ref{lem_prob_test}, we find for a test $a$ that
\begin{align*}
    \Pr \bc{\hat \sigma_a = 1} = 1- \Pr \bc{\Bin \bc{k, d/k} = 0} = \bc{1+n^{-\Omega(1)}} \bc{1-e^{-d}}.
\end{align*}
Thus, the expected number of positive tests in the neighbourhood of any uninfected individual $x \in \zero$ is 
\begin{align*}
    \bc{1+n^{-\Omega(1)}} \bc{1-e^{-d}} \Delta.   
\end{align*}
The lemma now follows readily from the fact that $\Delta = \Theta(1)$ as $n \to \infty$. 
\end{proof}

\begin{lemma} \label{lem_oneplus}
We have
\begin{align*}
    p_{1,\Delta} := \Pr \bc{x \in \oneplus \mid x \in \one} = \bc{1+n^{-\Omega(1)}} \bc{1 - \exp\bc{-d - d \frac{1-\lambda}{\lambda} p_{0,\Delta-1}}}^\Delta
\end{align*} 
where $\Erw[\Delta]=cd/\lambda$.
\end{lemma}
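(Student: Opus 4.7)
The plan is to decompose the event $\cbc{x \in \oneplus}$ into a conjunction over the $\Delta$ tests incident to $x$ and to exploit asymptotic independence. For each $a \in \partial x$, the event $\cO_{1+,a}$ that $a$ is compatible with $x \in \oneplus$ asserts that there exists some $y \in \partial a \setminus \cbc{x}$ with $y \notin \zerominus$. Since $x$ is infected, every $a \in \partial x$ is automatically positive, so $x \in \oneplus$ if and only if $\cO_{1+,a}$ holds for all $a \in \partial x$. First I would argue that the family $(\cO_{1+,a})_{a \in \partial x}$ is asymptotically independent: with $\Delta = \Theta(1)$, $m = cn$, and individuals drawn into tests uniformly without replacement, distinct tests of $x$ share a common further individual only with probability $O(1/n)$, and the second neighbourhood of $x$ has bounded expected size. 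A standard second-moment computation then yields
\begin{align*}
    p_{1,\Delta} = \bc{1 + n^{-\Omega(1)}} \prod_{a \in \partial x} \Pr\bc{\cO_{1+,a}}.
\end{align*}

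It then suffices to evaluate $\Pr\bc{\cO_{1+,a}} = 1 - \Pr\bc{\overline{\cO_{1+,a}}}$. The complement forces every other individual in $a$ to lie in $\zerominus$, and splits into two ingredients: no \emph{other} infected individual belongs to $a$, and every uninfected individual in $\partial a$ lies in $\zerominus$. Conditional on $x \in \partial a$ and $x \in \one$, \Lem~\ref{lem_prob_test} and Fact~\ref{fact_dist} give that the remainder of $a$ consists of $\Bin(k-1, d/k)$ infected plus $\Bin(n-k, d/k)$ uninfected individuals, converging to $\Po(d)$ and $\Po((1-\lambda)d/\lambda)$ respectively up to a $\bc{1+n^{-\Omega(1)}}$ multiplicative error. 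The first ingredient contributes $e^{-d}$. For the second, any uninfected $y$ with $y \in \partial a$ and $\abs{\partial y} = \Delta$ has its remaining $\Delta-1$ tests drawn from $F \setminus \cbc{a}$ and behaves asymptotically like an unconditional uninfected individual of degree $\Delta-1$; \Lem~\ref{lem_zeroplus} with parameter $\Delta-1$ therefore gives $\Pr\bc{y \in \zerominus \mid y \in \zero,\, y \in \partial a,\, \abs{\partial y} = \Delta} = 1 - p_{0,\Delta-1}$ up to lower-order error. As distinct uninfected neighbours of $a$ classify through essentially disjoint further tests, their $\zerominus$-statuses factorise, and a Poisson probability-generating-function computation yields
\begin{align*}
    \Pr\bc{\text{every uninfected } y \in \partial a \setminus \cbc{x} \text{ lies in } \zerominus} = \bc{1+n^{-\Omega(1)}} \exp\bc{-\tfrac{1-\lambda}{\lambda}\, d\, p_{0,\Delta-1}}.
\end{align*}

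Combining the two ingredients and subtracting from one gives
\begin{align*}
    \Pr\bc{\cO_{1+,a}} = \bc{1+n^{-\Omega(1)}} \bc{1 - \exp\bc{-d - d\, \tfrac{1-\lambda}{\lambda}\, p_{0,\Delta-1}}},
\end{align*}
and raising to the $\Delta$-th power via the independence established at the outset yields the claim. The main obstacle is error bookkeeping: propagating the $\bc{1+n^{-\Omega(1)}}$ factor through (i) the Poisson approximations for the test composition, (ii) the conditional factorisation of the $\zerominus$-statuses of the $\Theta(1)$ uninfected individuals in $\partial a$, and (iii) the final product over the $\Delta$ tests in $\partial x$. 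Since each step introduces only a polynomially small multiplicative error and the number of compounded factors is $O(1)$, the total error remains $n^{-\Omega(1)}$.
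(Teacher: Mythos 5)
Your proposal is correct and follows essentially the same route as the paper: decompose $\cbc{x \in \oneplus}$ over the $\Delta$ tests, compute for each test the probability $\exp\bc{-d - d\tfrac{1-\lambda}{\lambda}p_{0,\Delta-1}}$ that all other members lie in $\zerominus$ (no further infected individual, and every uninfected neighbour fails to be in $\zeroplus$ via $p_{0,\Delta-1}$ applied to its remaining $\Delta-1$ tests), and raise the complement to the power $\Delta$ using asymptotic independence and $\Delta=\Theta(1)$. The only cosmetic difference is that you evaluate the uninfected contribution through a Poisson generating function where the paper uses a thinned binomial, and you spell out the independence and error bookkeeping that the paper leaves implicit.
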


\begin{proof}
By \Lem~\ref{lem_zeroplus} and the fact that the maximum number of uninfected individuals per test is $O\bc{\log n}$ with probability at least $1-o\bc{n^{-2}}$ we find for any test $a \in \partial x$ for $x \in \one$
\begin{align*}
    \Pr \bc{\partial a \cap \bc{\one \cup \zeroplus} = \emptyset} &= \Pr \bc{\Bin(k, d/k)=0} \cdot \Pr \bc{\Bin \bc{(n-k) p_{0,\Delta-1}, d/k}=0} \\
    &= \bc{1-d/k}^{k + \bc{1+n^{-\Omega(1)}} \frac{1-\lambda}{\lambda} p_{0,\Delta-1}} = \bc{1+n^{-\Omega(1)}} \exp \bc{-d - d \frac{1-\lambda}{\lambda} p_{0,\Delta-1}}
\end{align*}
The lemma now readily follows from the fact that $\Delta = \Theta(1)$ as $n \to \infty$.
\end{proof}


\subsection{Offspring for $x \in V_0^+$}

Next, we will characterize the second neighbourhood of an individual $x \in \zeroplus$ in the reduced graph $\G'$. For clarity, let us denote this individual by $x_r$.
To get started, for each test that in the neighborhood of $x_r$ we can calculate the probability that no individual from $\oneminus$ is contained in its second neighborhood. Note that we identify such tests since they are removed in $\G'$.

\begin{lemma} \label{lem_offspring_0_tests}
For any $a \in \partial x_r$, we have
\begin{align*}
    q_{0} := \Pr \bc{\oneminusi{2,a} = \emptyset \mid \cO_{0+, a}} = \bc{1+n^{-\Omega(1)}} \frac{\bc{1-\exp\bc{-d p_{1,\Delta-1}}} \exp \bc{-d\bc{1-p_{1,\Delta-1}}}} {\bc{1-\exp(-d)}}
\end{align*}
\end{lemma}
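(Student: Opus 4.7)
The plan is to decompose $\cO_{0+,a}$ into elementary events about the composition of $\partial a\setminus\cbc{x_r}$ and then apply a Poisson-thinning argument. Since $x_r\in\zeroplus$ is uninfected, the test $a$ is positive if and only if at least one other individual in $\partial a$ is infected, so $\cO_{0+,a}$ is equivalent to $\cbc{\abs{\partial a\cap\one}\geq1}$. Similarly, $\oneminusi{2,a}=\emptyset$ is equivalent to requiring that every infected neighbour $y\in\partial a\cap\one$ actually lies in $\oneplus$. Hence
\begin{align*}
q_0 \;=\; \frac{\Pr\brk{\abs{\partial a\cap\oneminus}=0,\;\abs{\partial a\cap\one}\geq1}}{\Pr\brk{\abs{\partial a\cap\one}\geq1}}.
\end{align*}

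For the denominator, \Lem~\ref{lem_prob_test} together with Fact~\ref{fact_dist} yields $\abs{\partial a\cap\one}\sim\Bin(k,d/k)$, so $\Pr\brk{\abs{\partial a\cap\one}\geq1}=(1+n^{-\Omega(1)})(1-\eul^{-d})$. For the numerator I would show that each infected $y\in\partial a$ lies in $\oneplus$ independently (up to $n^{-\Omega(1)}$ corrections) with probability $p_{1,\Delta-1}$. The value $p_{1,\Delta-1}$ rather than $p_{1,\Delta}$ appears because test $a$ cannot by itself certify $y\in\oneminus$: indeed $x_r\in\partial a$ and $x_r\in\zeroplus\subseteq\zero\setminus\zerominus$, so test $a$ contains an individual not in $\zerominus$. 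Consequently the $\oneminus$-status of $y$ depends only on its remaining $\Delta-1$ tests, and repeating the computation of \Lem~\ref{lem_oneplus} with $\Delta$ replaced by $\Delta-1$ yields the claimed conditional probability.

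Granting this independence, a Poisson-thinning argument built on Fact~\ref{fact_dist} splits $\abs{\partial a\cap\one}$ into asymptotically independent Poisson variables $\abs{\partial a\cap\oneplus}\sim\Po(dp_{1,\Delta-1})$ and $\abs{\partial a\cap\oneminus}\sim\Po(d(1-p_{1,\Delta-1}))$. Substituting into the ratio above and using $\Pr\brk{\Po(\mu)=0}=\eul^{-\mu}$ together with $\Pr\brk{\Po(\mu)\geq 1}=1-\eul^{-\mu}$ gives
\begin{align*}
q_0 \;=\; (1+n^{-\Omega(1)})\,\frac{\eul^{-d(1-p_{1,\Delta-1})}\bc{1-\eul^{-dp_{1,\Delta-1}}}}{1-\eul^{-d}},
\end{align*}
as desired.

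The main obstacle is justifying the asymptotic independence of the classifications $\cbc{y\in\oneplus}$ across different $y\in\partial a\cap\one$. Membership in $\oneplus$ depends on the depth-two neighbourhood of $y$, so two distinct infected neighbours of $a$ could in principle interact through shared further tests. However, since each individual has bounded degree $\Delta=O(1)$ and $m=\Theta(n)$, the probability that two prescribed individuals share any test other than $a$ is $O(1/n)$; a union bound over the $O(1)$ pairs in $\partial a$, together with an analogous estimate for cycles of length at most four through $a$, shows that with probability $1-n^{-\Omega(1)}$ the depth-two neighbourhood of $a$ is a tree, on which event the cavity computations decouple exactly. This is where the $n^{-\Omega(1)}$ error term originates, and it is the only genuinely quantitative step in the argument.
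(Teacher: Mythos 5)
Your proof is correct and follows essentially the same route as the paper: the ratio you write down is exactly the Bayes decomposition the paper uses (your numerator is the joint probability $\Pr\bc{\cO_{0+,a},\,\oneminusi{2,a}=\emptyset}$, which the paper factors as $\Pr\bc{\cO_{0+,a}\mid\oneminusi{2,a}=\emptyset}\Pr\bc{\oneminusi{2,a}=\emptyset}$), and both arguments rest on the same Poissonization of $\abs{\partial a\cap\oneplus}$ and $\abs{\partial a\cap\oneminus}$ with asymptotic independence. Your added justification of the $p_{1,\Delta-1}$ index and of the independence via the locally tree-like structure is more explicit than the paper's, but not a different method.
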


\begin{proof}
The lemma follows from a straightforward application of Bayes Theorem.
\begin{align} \label{eq_0_Bayes}
    \Pr \bc{\oneminusi{2,a} = \emptyset \mid \cO_{0+, a}} = \frac{\Pr \bc{\cO_{0+, a} \mid \oneminusi{2,a} = \emptyset} \Pr \bc{\oneminusi{2,a} = \emptyset}}{\Pr \bc{\cO_{0+, a}}}.
\end{align}
By Fact~\ref{fact_dist} and \Lem s~\ref{lem_zeroplus} and \ref{lem_oneplus} we have
\begin{align}
    \Pr \bc{\cO_{0+, a} \mid \oneminusi{2,a} = \emptyset} &= \bc{1-\Pr \bc{\oneplusi{2,a}=\emptyset}} = \bc{1+n^{-\Omega(1)}} \bc{1-\exp\bc{-d p_{1,\Delta-1}}} \label{eq_0_1} \\
    \Pr \bc{\oneminusi{2,a} = \emptyset} &= \bc{1+n^{-\Omega(1)}} \exp \bc{-d\bc{1-p_{1,\Delta-1}}} \label{eq_0_2} \\
    \Pr \bc{\cO_{0+, a}} &= 1-\Pr \bc{\onei{2, a}=\emptyset} = \bc{1+n^{-\Omega(1)}} \bc{1-\exp(-d)} \label{eq_0_3}
\end{align}
Plugging \eqref{eq_0_1}--\eqref{eq_0_3} into \eqref{eq_0_Bayes} yields the desired expression.
\end{proof}

Thus, a test in the neighborhood of $x_r$ remains unexplained with $q_0$. Using this insight, we obtain the following two results for the offspring distribution in the second neighbourhood of $x_r$.

\begin{lemma} \label{lem_offspring_0_0}
Let $(\vX_i)_{i \geq 1}$ be a sequence of independent Bernoulli random variables with parameter $q_0$. Given $\cO_{0+}$ we have
\begin{align*}
    \zeroplusi{2} \xrightarrow{d} \sum_{i=1}^\Delta \vX_i \Po\bc{\frac{1-\lambda}{\lambda}d p_{0,\Delta-1}} \qquad \text{as} \quad n \to \infty.
\end{align*}
\end{lemma}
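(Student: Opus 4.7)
The plan is to decompose $\zeroplusi{2}$ into a sum of contributions from the $\Delta$ tests $a_1,\dots,a_\Delta\in\partial x_r$, verify that each contribution is asymptotically distributed as an independent copy of $\vX\cdot\Po\bc{\frac{1-\lambda}{\lambda}d\,p_{0,\Delta-1}}$ with $\vX\sim\Be(q_0)$, and then sum. Fix a single test $a=a_i$. By Fact~\ref{fact_dist} together with the Poisson approximation of $\Bin$ at rate $\Theta(1)$, the numbers of uninfected and infected individuals in $\partial a\setminus\set{x_r}$ converge jointly in distribution to independent $\Po\bc{(1-\lambda)d/\lambda}$ and $\Po(d)$ random variables. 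Conditional on being in the positive test $a$, each uninfected individual lies in $\zeroplus$ iff its remaining $\Delta-1$ tests are all positive, which by \Lem~\ref{lem_zeroplus} happens independently across uninfected individuals with probability $p_{0,\Delta-1}$. Poisson thinning then gives that the number of $\zeroplus$ individuals in $\partial a\setminus\set{x_r}$ converges to $\Po\bc{\frac{1-\lambda}{\lambda}d\,p_{0,\Delta-1}}$, and this limit is unaffected by conditioning on $\cO_{0+, a}$, which only constrains the infected side of $a$.

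Next, the test $a$ survives in $\G'$ iff $\oneminusi{2,a}=\emptyset$, a conditional Bernoulli($q_0$) event by \Lem~\ref{lem_offspring_0_tests}. Since the $\oneminus$ condition inside $a$ depends only on the infected individuals of $a$ together with their own tests, while the $\zeroplus$ count in $\partial a\setminus\set{x_r}$ depends only on the uninfected ones and theirs, these two quantities are asymptotically independent under the Poissonization of test contents. Consequently the contribution of $a$ to $\zeroplusi{2}$ is distributed as $\vX\vZ$ with $\vX\sim\Be(q_0)$ independent of $\vZ\sim\Po\bc{\frac{1-\lambda}{\lambda}d\,p_{0,\Delta-1}}$.

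Finally I would establish asymptotic independence across $i=1,\dots,\Delta$ and sum. Because $\Delta=\Theta(1)$ while individuals are assigned uniformly among $m=cn$ tests, the probability that any two of the tests $a_1,\dots,a_\Delta$ share another individual is $O(1/n)$, and similarly the probability that two distinct second-neighbours of $x_r$ share any additional test is $O(1/n)$. A coupling with the fully Poissonized model therefore makes the $\Delta$ pairs $(\vX_i,\vZ_i)$ mutually independent in the limit, and summing yields the claim. The main obstacle is the joint conditioning $\cO_{0+}=\bigcap_i\cO_{0+,a_i}$, which a priori couples the $\Delta$ tests through $x_r$; but since each $\cO_{0+,a_i}$ has probability bounded away from $0$ and $\Delta=O(1)$, standard second-moment estimates show that this joint conditioning factorizes up to $1+o(1)$ errors, which suffices for the claimed convergence in distribution.
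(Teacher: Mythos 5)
Your proposal is correct and follows essentially the same route as the paper: Poissonize the number of uninfected individuals in each test via Fact~\ref{fact_dist}, thin by $p_{0,\Delta-1}$ using \Lem~\ref{lem_zeroplus}, attach the Bernoulli$(q_0)$ survival indicator from \Lem~\ref{lem_offspring_0_tests} for the event $\oneminusi{2,a}=\emptyset$, and sum over the $\Delta=\Theta(1)$ tests. You merely make explicit the asymptotic-independence and joint-conditioning arguments that the paper's proof leaves implicit.
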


\begin{proof}
Given $\cO_{0+}$, we know that any test $a \in \partial x_r$ contains an infected individual and is thus positive. Starting from the distribution of uninfected individuals in test $a$ from Fact~\ref{fact_dist} and using \Lem~\ref{lem_zeroplus} and the fact that the maximum test degree is $O(\log n)$ with probability at least $1-o\bc{n^{-2}}$, we find
\begin{align*}
   \Bigg\{  \Big\{\zeroplusi{2,a} \mid \cO_{0+}, \oneminusi{2,a}\Big\} = \emptyset\Bigg\} \xrightarrow{d} \Po \bc{\frac{1-\lambda}{\lambda} d p_{0,\Delta-1}}.  
\end{align*}
as $n\to\infty$. The lemma now follows readily from the fact that warning propagation removes tests featuring individuals from $\oneminus$ and that we have $\Delta=\Theta(1)$.
\end{proof}

\begin{lemma} \label{lem_offspring_0_1}
Let $(\vX_i)_{i \geq 1}$ be a sequence of independent Bernoulli random variables with parameter $q_0$. Given $x \in V_0^+$ we have
\begin{align*}
    \oneplusi{(2)} \xrightarrow{d} \sum_{i=1}^\Delta \vX_i \Po_{\geq 1 }\bc{d p_{1, \Delta-1}} \qquad \text{as} \quad n \to \infty.
\end{align*}
\end{lemma}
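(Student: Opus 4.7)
The plan is to build up the distribution test-by-test. Fix a test $a \in \partial x_r$ under $\G$. By Fact~\ref{fact_dist}, the number of infected individuals in $a$ is $\Bin(k, d/k)$, which converges to $\Po(d)$ as $n\to\infty$. By \Lem~\ref{lem_oneplus} each infected individual sits in $V_1^+$ independently with probability $p_{1,\Delta-1}$ (up to $1+n^{-\Omega(1)}$ corrections that do not affect the limit). Hence, a Poisson-thinning argument shows that asymptotically the pair $(|\partial a \cap \oneplus|,\,|\partial a \cap \oneminus|)$ converges in distribution to a pair of independent Poissons with means $d\,p_{1,\Delta-1}$ and $d(1-p_{1,\Delta-1})$, respectively.

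Now I would condition on $x_r \in \zeroplus$. Under $\cO_{0+}$ each of the $\Delta$ tests in $\partial x_r$ contains at least one infected individual, i.e.\ $\cO_{0+,a}$ holds for every $a\in\partial x_r$. In the reduced graph $\G'$ the test $a$ survives precisely when $\oneminusi{2,a}=\emptyset$. Let $\vX_i$ be the indicator that the $i$-th test of $\partial x_r$ survives; by \Lem~\ref{lem_offspring_0_tests} we have $\Pr(\vX_i=1\mid \cO_{0+})=q_0(1+n^{-\Omega(1)})$. Given $\vX_i=1$ (together with $\cO_{0+,a}$), every infected individual in $a$ lies in $\oneplus$, and there is at least one such individual; combining this with the Poisson-thinning statement from the previous paragraph yields
\begin{align*}
    \bigl\{|\oneplusi{2,a}| \;\bigm|\; \cO_{0+,a},\,\oneminusi{2,a}=\emptyset \bigr\} \xrightarrow{d} \Po_{\geq 1}\bigl(d\,p_{1,\Delta-1}\bigr) \qquad (n\to\infty).
\end{align*}
Consequently the contribution of a retained test $a$ to $\oneplusi{2}$ converges in distribution to $\Po_{\geq 1}(d\,p_{1,\Delta-1})$, while a non-retained test contributes $0$.

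The final step is to decompose $\oneplusi{2} = \sum_{i=1}^{\Delta} \vX_i \cdot |\oneplusi{2,a_i}|$ and upgrade the per-test statements to a joint convergence across the $\Delta$ tests. Since $\Delta=\Theta(1)$ and $k,n\to\infty$, the first neighbourhoods of the different $a_i$'s overlap only in an $O(1/n)$ fraction of vertices; in particular the events defining $\vX_i$ and the Poisson limits for different tests become asymptotically independent. Together with \Lem~\ref{lem_offspring_0_tests} this gives the desired limit $\sum_{i=1}^\Delta \vX_i\,\Po_{\geq 1}(d\,p_{1,\Delta-1})$ with $\vX_i\stackrel{\text{iid}}{\sim}\Be(q_0)$.

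The main obstacle is the joint-independence step: one must simultaneously track (i) the Poisson thinning of infected individuals into $\oneplus$ and $\oneminus$ within each test, and (ii) the fact that the retention events $\{\oneminusi{2,a_i}=\emptyset\}$ depend on the \emph{second}-neighbourhood structure, which correlates tests through shared individuals and through shared uninfected neighbours in the further layers. The key quantitative input is that both the expected pairwise intersection $|\partial a_i \cap \partial a_j|$ and the expected number of shared second-neighbours are $o(1)$, so the joint law of the $\Delta$-tuple factorises in the limit; this is where the $(1+n^{-\Omega(1)})$ error terms from \Lem s~\ref{lem_zeroplus}--\ref{lem_offspring_0_tests} are needed.
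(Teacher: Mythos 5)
Your proposal is correct and follows essentially the same route as the paper: per test $a\in\partial x_r$, the conditional law of $\abs{\oneplusi{2,a}}$ given $\cO_{0+,a}$ and $\oneminusi{2,a}=\emptyset$ is the zero-truncated Poisson $\Po_{\geq 1}(d\,p_{1,\Delta-1})$, and the $\Delta$ tests are combined with independent $\Be(q_0)$ survival indicators from \Lem~\ref{lem_offspring_0_tests}. The only cosmetic difference is that the paper obtains the truncated Poisson via a short Bayes computation, whereas you derive it by making the underlying Poisson thinning and the asymptotic independence across tests explicit — steps the paper leaves implicit.
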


\begin{proof}
Another application of Bayes Theorem reveals for any test $a \in \partial x_r$ and any $i \geq 0$
\begin{align}
    \Pr \bc{\abs{\oneplusi{2,a}}=i \Big|\cO_{0+, a}, \Big(\oneminusi{2,a}=\emptyset\Big)} &= \frac{\Pr \bc{\cO_{0+,a} \mid \abs{\oneplusi{2,a}}=i, \oneminusi{2,a}=\emptyset} \Pr \bc{\abs{\oneplusi{2,a}}=i \mid \oneminusi{2,a}=\emptyset}}{\Pr \bc{\cO_{0+,a} \mid \oneminusi{2,a}=\emptyset}} \notag \\
    &= \frac{\vecone \cbc{i > 0} \Pr \bc{\abs{\oneplusi{2,a}}=i}}{1-\Pr\bc{\abs{\oneplusi{2,a}}=0}} \label{eq_0_4}.
\end{align}
Indeed, \eqref{eq_0_4} is the probability density function of the distribution $\Po_{\geq 1}\bc{d q_{1,\Delta-1}}$. Thus, the lemma follows from the fact that warning propagation removes tests featuring individuals from $\oneminus$ and $\Delta=\Theta(1)$.
\end{proof}

\subsection{Offspring for $x \in V_1^+$}

We adopt the same notation as above. To be precise, we will write $\cO_{1+}$ to denote that the considered individual is from $\oneplus$. Moreover, we write $\cO_{1+, a}$ for the event that test $a$ is compatible with the considered individual being in $\oneplus$, i.e. the test featuring at least one other infected individual or individual from $\zeroplus$.

\begin{lemma} \label{lem_offspring_1_tests}
For any $a \in \partial x_r$, we have
\begin{align*}
    q_{1} := \Pr \bc{\oneminusi{2,a} = \emptyset \mid \cO_{1+, a}} = \bc{1+n^{-\Omega(1)}} \frac{\bc{1-\exp\bc{-d \frac{1-\lambda}{\lambda} p_{0,\Delta-1} - d p_{1,\Delta-1}}} \exp \bc{-d\bc{1-p_{1,\Delta-1}}}}{1-\exp(-d)}
\end{align*}
\end{lemma}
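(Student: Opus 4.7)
The plan is to mirror the proof of Lemma~\ref{lem_offspring_0_tests} by invoking Bayes' theorem,
\begin{align*}
q_1 = \Pr(\oneminusi{2,a} = \emptyset \mid \cO_{1+,a}) = \frac{\Pr(\cO_{1+,a} \mid \oneminusi{2,a} = \emptyset)\,\Pr(\oneminusi{2,a} = \emptyset)}{\Pr(\cO_{1+,a})},
\end{align*}
and then evaluating each of the three factors via Poisson approximations to the counts of the different classes of individuals in $\partial a \setminus \{x_r\}$.

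By Fact~\ref{fact_dist} the numbers of uninfected and infected neighbours of $a$ other than $x_r$ converge to $\Po((1-\lambda)d/\lambda)$ and $\Po(d)$, respectively. Since test $a$ is automatically positive thanks to $x_r \in V_1$, the classification of any other $y \in \partial a$ is determined by $y$'s remaining $\Delta-1$ tests; hence Lemma~\ref{lem_zeroplus} gives $\Pr(y \in \zeroplus \mid y \in \zero) = (1+n^{-\Omega(1)})\,p_{0,\Delta-1}$ and Lemma~\ref{lem_oneplus} gives $\Pr(y \in \oneplus \mid y \in \one) = (1+n^{-\Omega(1)})\,p_{1,\Delta-1}$, independently across $y$. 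Poisson thinning then yields the asymptotically independent limits
\begin{align*}
|\oneplus \cap (\partial a \setminus \{x_r\})| &\xrightarrow{d} \Po(d\,p_{1,\Delta-1}),\\
|\oneminus \cap (\partial a \setminus \{x_r\})| &\xrightarrow{d} \Po(d(1-p_{1,\Delta-1})),\\
|\zeroplus \cap (\partial a \setminus \{x_r\})| &\xrightarrow{d} \Po\bigl(\tfrac{1-\lambda}{\lambda}d\,p_{0,\Delta-1}\bigr).
\end{align*}
Given $\oneminusi{2,a} = \emptyset$, the event $\cO_{1+,a}$ (that $a$ contains at least one other infected individual or $\zeroplus$ individual) reduces to requiring that at least one of the first or third counts be positive, producing the numerator factor $(1+n^{-\Omega(1)})(1 - \exp(-d\,p_{1,\Delta-1} - \tfrac{1-\lambda}{\lambda} d\,p_{0,\Delta-1}))$. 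The middle factor $\Pr(\oneminusi{2,a} = \emptyset)$ is just the second count being zero, giving $(1+n^{-\Omega(1)})\exp(-d(1-p_{1,\Delta-1}))$. The denominator $\Pr(\cO_{1+,a})$ follows from the same thinned Poisson picture, and assembling the three pieces via Bayes yields the stated expression for $q_1$.

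The main technical point is the uniform control of the $(1 + n^{-\Omega(1)})$ error term, which requires composing the Binomial-to-Poisson error (of order $O(1/k) = n^{-\Omega(1)}$ since $\Delta = \Theta(1)$ and $k = \lambda n$), the approximation errors from Lemmas~\ref{lem_zeroplus} and \ref{lem_oneplus}, and the asymptotic independence of the classifications of distinct $y \in \partial a \setminus \{x_r\}$, each determined by the $\Delta-1$ tests of $y$ other than $a$ and therefore essentially independent of one another. Since $\Delta = \Theta(1)$, these small errors compose harmlessly, completing the calculation.
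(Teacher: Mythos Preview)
Your proposal is essentially the same as the paper's proof: both apply Bayes' theorem and evaluate the three factors via Poisson approximations, with your version adding a bit more justification for the thinning step. One point to tighten: you leave the denominator $\Pr(\cO_{1+,a})$ implicit, but following your own description of $\cO_{1+,a}$ (``at least one other infected individual or individual from $\zeroplus$'') the thinned-Poisson picture would give $1-\exp(-d-\tfrac{1-\lambda}{\lambda}d\,p_{0,\Delta-1})$ rather than the $1-\exp(-d)$ appearing in the statement; the paper simply computes $\Pr(\cO_{1+,a})=1-\Pr(\onei{2,a}=\emptyset)=1-\exp(-d)$, so you should state explicitly which reading you use to match the claimed formula.
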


\begin{proof}
By Bayes Theorem we have
\begin{align} \label{eq_1_Bayes} 
    \Pr \bc{\oneminusi{2,a} = \emptyset \mid \cO_{1+, a}} &= \frac{\Pr \bc{\cO_{1+, a} \mid \oneminusi{2,a} = \emptyset} \Pr \bc{\oneminusi{2,a} = \emptyset}}{\Pr \bc{\cO_{1+, a}}}.
\end{align}
For each of these terms, we find
\begin{align}
    \Pr \bc{\cO_{1+, a} \mid \oneminusi{2,a} = \emptyset} &= 1-\Pr\bc{\zeroplusi{2,a}=\emptyset}\Pr\bc{\zeroplusi{2,a}=\emptyset} \notag\\
    &= \bc{1+n^{-\Omega(1)}} \bc{1-\exp\bc{-d \frac{1-\lambda}{\lambda} p_{0,\Delta-1} - d p_{1,\Delta-1}}} \label{eq_1_1} \\
    \Pr \bc{\oneminusi{2,a} = \emptyset} &= \bc{1+n^{-\Omega(1)}} \exp \bc{-d\bc{1-p_{1,\Delta-1}}} \label{eq_1_2} \\
    \Pr \bc{\cO_{1+, a}} &= 1-\Pr \bc{\onei{2, a}=\emptyset} = \bc{1+n^{-\Omega(1)}} \bc{1-\exp(-d)} \label{eq_1_3}
\end{align}
Plugging \eqref{eq_1_1}--\eqref{eq_1_3} into \eqref{eq_1_Bayes} yields the desired expression.
\end{proof}

Next, let us specify the offspring distribution for $\zeroplusi{2}$ and $\oneplusi{2}$. In contrast to having $x \in \zeroplus$ as the starting individual, the distribution of both types is not independent and we have to specify a joint distribution.

\begin{lemma} \label{lem_offspring_1_01}
Given $\cO_{1+}$ we have
\begin{align*}
    \zeroplusi{2}, \oneplusi{2} \xrightarrow{d} \sum_{i=1}^\Delta \vX_i \vY_i \qquad \text{as} \quad n \to \infty 
\end{align*}
with $(\vX_i)_{i \geq 1}$ be a sequence of independent Bernoulli random variables with parameter $q_1$ and $(\vY_i)_{i \geq 1}$ being a sequence of independent random variables with support in $(\NN_0)^2$ and probability density function for any $i \in \NN$ and $j,k \geq 0$ given by
\begin{align*}
    \Pr \bc{\vY_i = (j,k)} = \frac{\vecone \cbc{j+k>0} \Pr \bc{\vA=j} \Pr \bc{\vB =k}}{1-\Pr \bc{\vA=0} \Pr \bc{\vB=0}}
\end{align*}
where 
\begin{align*}
    \vA &\sim \Po\bc{\frac{1-\lambda}{\lambda} d p_{0, \Delta-1}} \\
    \vB &\sim \Po\bc{d p_{1, \Delta-1}}.
\end{align*}

\end{lemma}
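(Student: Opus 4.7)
The proof closely mirrors that of \Lem s~\ref{lem_offspring_0_0} and~\ref{lem_offspring_0_1}, but because under $\cO_{1+,a}$ the test $a$ is witnessed to be ``explained'' by \emph{either} a further $\oneplus$-individual \emph{or} a further $\zeroplus$-individual, the counts $|\zeroplusi{2,a}|$ and $|\oneplusi{2,a}|$ become coupled through the conditioning. I will therefore analyse the joint distribution on a single test $a \in \partial x_r$ and then sum over the $\Delta = \Theta(1)$ tests in $\partial x_r$ using the asymptotic independence of individual-to-test assignments.

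Fix a test $a \in \partial x_r$. Starting from Fact~\ref{fact_dist} and applying \Lem s~\ref{lem_zeroplus} and~\ref{lem_oneplus} exactly as in the proof of \Lem~\ref{lem_offspring_0_0}, the Poisson limit theorem shows that, unconditionally, the pair $(|\zeroplusi{2,a}|, |\oneplusi{2,a}|)$ converges jointly to $(\vA,\vB)$ with the independent Poisson parameters given in the statement; joint independence holds because $\zero$ and $\one$ are disjoint and, up to a $1+n^{-\Omega(1)}$ factor, individuals are assigned to tests independently. Conditioning on $\oneminusi{2,a} = \emptyset$ perturbs this joint law only by $1+n^{-\Omega(1)}$, since that event is determined solely by the $\one$-side of $a$ and has non-vanishing limiting probability $\exp\bc{-d\bc{1-p_{1,\Delta-1}}}$.

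Next, observe that given $\oneminusi{2,a} = \emptyset$ the event $\cO_{1+,a}$ is equivalent to $\cbc{|\zeroplusi{2,a}| + |\oneplusi{2,a}| > 0}$: indeed, if both counts were zero, every other individual in $a$ would lie in $\zerominus$, forcing $x_r \in \oneminus$ and contradicting $x_r \in \oneplus$. A Bayes computation identical in form to~\eqref{eq_0_4} therefore yields, for $j,k \geq 0$,
\begin{align*}
    \Pr \bc{|\zeroplusi{2,a}| = j,\, |\oneplusi{2,a}| = k \mid \cO_{1+,a},\, \oneminusi{2,a} = \emptyset} = \frac{\vecone\cbc{j+k > 0}\,\Pr \bc{\vA = j}\,\Pr \bc{\vB = k}}{1 - \Pr \bc{\vA = 0}\,\Pr \bc{\vB = 0}},
\end{align*}
which is precisely the density of $\vY_i$. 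Combining this with \Lem~\ref{lem_offspring_1_tests}, which supplies $q_1$ as the conditional probability of $\oneminusi{2,a} = \emptyset$ given $\cO_{1+,a}$, each test in $\partial x_r$ contributes an independent copy of $\vX_i \vY_i$ to the two second-neighbourhood counts.

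Finally, because individuals draw their tests independently and $\Delta = \Theta(1)$, the $\Delta$ tests in $\partial x_r$ can be coupled to i.i.d.\ draws at a total-variation cost of $n^{-\Omega(1)}$, so the desired convergence to $\sum_{i=1}^{\Delta} \vX_i \vY_i$ follows coordinatewise, exactly as in \Lem s~\ref{lem_offspring_0_0}--\ref{lem_offspring_0_1}. The main obstacle is the joint-conditioning step: one must verify that the Poisson independence of $\vA$ and $\vB$ survives conditioning on $\oneminusi{2,a}=\emptyset$, which holds up to $1+n^{-\Omega(1)}$ because that event depends only on the $\one$-side of $a$ and is therefore asymptotically independent of the $\zero$-side count $|\zeroplusi{2,a}|$.
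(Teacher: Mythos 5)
Your proposal is correct and follows essentially the same route as the paper: a Bayes computation on a single test $a\in\partial x_r$ conditional on $\cO_{1+,a}$ and $\oneminusi{2,a}=\emptyset$, identification of the two marginal counts as independent Poissons with parameters $\frac{1-\lambda}{\lambda}dp_{0,\Delta-1}$ and $dp_{1,\Delta-1}$, and then combination with the $q_1$ factor and summation over the $\Delta=\Theta(1)$ tests. You in fact spell out several steps the paper leaves implicit (the equivalence of $\cO_{1+,a}$ with $\cbc{j+k>0}$ given no $\oneminus$ neighbours, the stability of the joint Poisson law under that conditioning, and the final aggregation over tests), so no substantive difference in approach.
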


\begin{proof}
For any test $a \in \partial x_r$ and $i,j \geq 0$, we find by Bayes Theorem
\begin{align*}
    &\Pr \bc{\abs{\zeroplusi{2,a}}=i, \abs{\oneplusi{2,a}}=j \mid \cO_{1+, a}, \oneminusi{2,a}=\emptyset}\\
    &\qquad = \frac{\Pr \bc{\cO_{1+, a} \mid \abs{\zeroplusi{2,a}}=i, \abs{\oneplusi{2}}=j, \oneminusi{2,a}=\emptyset} \Pr \bc{\abs{\zeroplusi{2,a}}=i} \Pr \bc{\Pr \bc{\abs{\oneplusi{2}}=j}}}{1-\Pr \bc{\zeroplusi{2,a}=\emptyset} \Pr \bc{\oneplusi{2,a}=\emptyset}} \\
    &= \frac{\vecone \cbc{i+j>0} \Pr \bc{\abs{\zeroplusi{2,a}}=i} \Pr \bc{\abs{\oneplusi{2}}=j}}{1-\Pr \bc{\zeroplusi{2,a}=\emptyset} \Pr \bc{\oneplusi{2,a}=\emptyset}}.
\end{align*}
For the remaining two terms, we readily find
\begin{align*}
    \Pr \bc{\abs{\zeroplusi{2,a}}=i} &= \bc{1+n^{-\Omega(1)}} \Pr \bc{\Po\bc{\frac{1-\lambda}{\lambda} d p_{0, \Delta-1}} = i} \\
    \Pr \bc{\abs{\oneplusi{2,a}}=j} &= \bc{1+n^{-\Omega(1)}} \Pr \bc{\Po\bc{d p_{1, \Delta-1}}}
\end{align*}
\end{proof}

\begin{proof}[Proof of \Thm~\ref{thm_dist}]
The theorem follows from \Lem s~\ref{lem_offspring_0_0}, \ref{lem_offspring_0_1} and \ref{lem_offspring_1_01}.
\end{proof}

\bibliographystyle{plain}
\bibliography{bibliography}

\end{document}